\documentclass[12pt]{article}
\usepackage{amsmath, amssymb, amsthm}
\usepackage{cite}
\usepackage{arxiv}
\usepackage{subfig}
\usepackage{multirow}
\usepackage{graphicx}
\usepackage{listings}
\usepackage{xcolor}
\usepackage{url}

\title{\textbf{Explicit Factorization of $x^{p+1}-1$ over $\mathbb{Z}_{p^e}$: \\ A Structural Approach via Dickson Polynomials}}

\author{
    Yongchao Wang$^{1}$, Yang Ding$^{2}$, Jiansheng Yang$^{2}$, and Zhiqiu Huang$^{1}$ \\[0.5em]
    $^{1}$\textit{College of Computer Science and Technology, Nanjing University of Aeronautics and Astronautics, Nanjing, China} \\
    \texttt{\{wangyc, zqhuang\}@nuaa.edu.cn} \\[0.5em]
    $^{2}$\textit{Department of Mathematics, Shanghai University, Shanghai 200444, China} \\
    \texttt{dingyang@shu.edu.cn, yjsyjs@staff.shu.edu.cn}
}

\date{\today}

\theoremstyle{plain}
\newtheorem{thm}{Theorem}[section]

\newtheorem{lem}{Lemma}
\newtheorem{prop}{Proposition}

\newtheorem{defn}{Definition}
\newtheorem{rem}{Remark}

\numberwithin{equation}{section}

\begin{document}

\maketitle

\begin{abstract}
Let $p$ be an odd prime. The factorization of the polynomial $x^{p+1}-1$ over the integer residue ring $\mathbb{Z}_{p^e}$ is pivotal for constructing cyclic codes with Hermitian symmetry, a critical resource for Linear Complementary Dual (LCD) codes and Entanglement-Assisted Quantum Error-Correcting Codes (EAQECC). 
Traditionally, lifting factorizations relies on the generic Hensel's Lemma, masking the underlying algebraic structure. 
In this paper, we establish a structural isomorphism between the lifting process and the roots of a special auxiliary polynomial $V(x)$, unveiling a deterministic link to Dickson polynomials. 
Based on this theory, we develop \texttt{Dickson-Engine}, a linear-time algorithm ($O(ep)$) that outperforms standard libraries by orders of magnitude.
Applying this engine to $\mathbb{Z}_{169}$, we explicitly construct a family of classical LCD codes of length $n=182$ via the isometric Gray map. 
Our search reveals codes with parameters (e.g., $[182, 1, 168]_{13}$ and $[182, 2, 144]_{13}$) that are \textbf{near-optimal} with respect to the theoretical Griesmer Bound. 
Notably, we discover a ``robustness plateau'' starting from non-trivial dimensions ($k=4$), where the minimum distance remains stable ($d=120$) even as the dimension triples ($k=4 \rightarrow 12$). These codes provide exceptional resources for post-quantum cryptography and quantum error correction without entanglement consumption ($c=0$).
\end{abstract}

\bigskip

\noindent \textbf{Keywords:} Dickson polynomial, Polynomial factorization, Hensel's lemma, Local ring, Quantum codes, LCD codes, Post-Quantum Cryptography
\section{Introduction}

The factorization of polynomials over finite rings, particularly over the integer residue ring $\mathbb{Z}_{p^e}$, has long been a foundational problem in algebraic coding theory. While the case where $\gcd(n, p) = 1$ ensures unique factorization and has been extensively studied, the specific instance of $n = p+1$ has recently garnered significant attention, transcending its classical roots to become a cornerstone in the quantum era.

\subsection*{The Renaissance of $x^{p+1}-1$: From Classical to Quantum}
In the landscape of modern coding theory, $x^{p+1}-1$ represents a unique ``stability island.'' Unlike the complex repeated-root codes~\cite{Dinh2018}, $x^{p+1}-1$ belongs to the simple-root category (since $\gcd(p+1, p)=1$). Its roots in the Galois Ring $GR(p^e, 2)$ naturally satisfy the conjugacy relation $\zeta^p = \zeta^{-1}$, inducing a rigid \textbf{Hermitian inner product structure}.
This property is critical for constructing \textbf{Entanglement-Assisted Quantum Error-Correcting Codes (EAQECC)}~\cite{Sidana2022} and \textbf{Linear Complementary Dual (LCD)} codes, which are essential for lattice-based Post-Quantum Cryptography to resist hull attacks~\cite{Nishimura2025}.

\subsection*{The Challenge: Moving Beyond Iteration}
Despite its rising importance, obtaining the explicit irreducible factors of $x^{p+1}-1$ over $\mathbb{Z}_{p^e}$ remains a non-trivial computational task. 
The standard approach relies on \textbf{Hensel's Lemma}, which lifts factors from $\mathbb{F}_p$ to $\mathbb{Z}_{p^e}$. While theoretically guaranteeing existence and uniqueness, standard Hensel lifting is inherently \textbf{iterative}, treating the coefficients as black-box values to be refined step-by-step.
A pivotal attempt to structuralize this process was made by McGuire~\cite{McGuire2001}, who proposed lifting \textbf{Newton sums} instead of coefficients. While this approach cleverly exploits the linearity of the trace map, it remains bound to the iterative nature of lifting ($k \to 2k \to \dots$) and incurs additional computational overhead when converting sums back to coefficients via Newton-Girard formulas. Similarly, powerful lattice-based methods like those by van Hoeij~\cite{Hoeij2002}, while dominant over $\mathbb{Q}$, are often computationally excessive for the local geometry of $\mathbb{Z}_{p^e}$.

\textbf{The fundamental question remains:} Can we bypass these iterative approximations? Is there a hidden algebraic pattern that determines the lifted coefficients \textit{a priori}?

\subsection*{Our Contributions}

In this paper, we bridge the gap between the algebraic theory of Dickson polynomials and the explicit construction of coding resources for the quantum era. Our main contributions are threefold:

\begin{itemize}
    \item \textbf{A Structural Isomorphism via Dickson Polynomials:} 
    We reveal that the factorization of $x^{p+1}-1$ over $\mathbb{Z}_{p^e}$ is not arbitrary but strictly governed by the recursive structure of Dickson polynomials. By introducing a novel auxiliary polynomial $V(x)$, we transform the complex Hensel lifting process into a structural isomorphism that maps the problem to a scalar root-finding task. This theoretical breakthrough unifies the factorization logic for all odd primes.

    \item \textbf{Linear-Time Algorithm and Open-Source Engine:} 
    Based on the $V(x)$ mechanism, we propose a deterministic algorithm with linear complexity $O(e \cdot p)$, bypassing the heavy polynomial arithmetic of standard methods (e.g., NTL's $O(p^2)$). We release \textbf{\texttt{Dickson-Engine}}\footnote{\url{https://github.com/Nothing256/Dickson-Engine}}, a high-performance C implementation demonstrating $>$300$\times$ speedup.

    \item \textbf{Construction of Near-Optimal LCD Codes:} 
    Leveraging the computational power of our engine, we performed an extensive search for Linear Complementary Dual (LCD) codes over $\mathbb{Z}_{13^2}$ with length $n=182$. We successfully constructed a family of LCD codes that are \textbf{near-optimal} with respect to the theoretical \textbf{Griesmer Bound} (e.g., $[182, 1, 168]_{13}$ and $[182, 2, 144]_{13}$). Furthermore, we identified a ``robustness plateau'' where the minimum distance remains stable ($d=120$) even as the dimension triples ($k=4 \to 12$). These codes provide exceptional resources for post-quantum cryptography and entanglement-assisted quantum error correction without entanglement consumption ($c=0$).
\end{itemize}

This work transforms the factorization problem from a routine computation into a study of the intrinsic symmetry of the ring $\mathbb{Z}_{p^e}$, providing a transparent algebraic tool that aligns with the explicit construction needs of modern quantum and lattice-based cryptography.

\section{Preliminaries and Algebraic Tools}
\label{sec:preliminaries}

In this section, we establish the necessary notations and review key algebraic tools that underpin our factorization framework. Throughout this paper, let $\mathbb{Z}$ denote the ring of integers. For a prime power $q=p^e$, where $p$ is an odd prime and $e \ge 1$, $\mathbb{F}_q$ denotes the finite field with $q$ elements, and $\mathbb{Z}_{p^e}$ denotes the residue class ring $\mathbb{Z}/p^e\mathbb{Z}$. 

Our primary focus is the factorization of $x^n - 1$ over the local ring $\mathbb{Z}_{p^e}$, specifically for the case $n=p+1$.

\subsection{Cyclotomic Cosets and Factorization over $\mathbb{F}_p$}

The factorization of $x^n-1$ over the base field $\mathbb{F}_p$ is governed by the partition of indices modulo $n$.
Let $n$ be a positive integer coprime to $p$. The \textit{$p$-cyclotomic coset} modulo $n$ containing $i$ is defined as:
$$C_i = \{i \cdot p^j \pmod n \mid j=0, 1, \dots, r-1\},$$
where $r$ is the smallest positive integer such that $i \cdot p^r \equiv i \pmod n$. A subset $S \subseteq \{0, 1, \dots, n-1\}$ is called a \textit{complete set of representatives} if the cosets $\{C_i\}_{i \in S}$ form a partition of $\{0, 1, \dots, n-1\}$.

The following proposition is a standard result linking these cosets to irreducible factors:

\begin{prop}[\cite{Lidl1997,Huffman2003}]\label{prop:cyc_coset}
Let $n$ be coprime to $p$, and let $\alpha$ be a primitive $n$-th root of unity in an extension of $\mathbb{F}_p$.
\begin{itemize}
    \item[(i)] For any $s \in \{0, \dots, n-1\}$, the minimal polynomial of $\alpha^s$ over $\mathbb{F}_p$ is given by
    $$M_{\alpha^s}(x) = \prod_{i \in C_s} (x - \alpha^i),$$
    where $C_s$ is the $p$-cyclotomic coset containing $s$.
    \item[(ii)] The canonical factorization of $x^n - 1$ over $\mathbb{F}_p$ is
    $$x^n - 1 = \prod_{s \in S} M_{\alpha^s}(x),$$
    where $S$ is a complete set of representatives.
\end{itemize}
\end{prop}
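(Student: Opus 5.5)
The plan is to work entirely inside the splitting field $\mathbb{F}_{p^r}$ of $x^n-1$, where $r$ is the multiplicative order of $p$ modulo $n$; this order exists because $\gcd(n,p)=1$. In that field $\alpha$ generates the cyclic group of $n$-th roots of unity, so $x^n-1=\prod_{i=0}^{n-1}(x-\alpha^i)$. These $n$ roots are pairwise distinct, since the derivative $nx^{n-1}$ is nonzero at every root when $p\nmid n$. Both parts of Proposition~\ref{prop:cyc_coset} will then follow by grouping the linear factors along the orbits of the Frobenius map $\sigma: y\mapsto y^p$.

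For (i), the key steps are as follows.

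\textbf{Step 1.} Set $f_s(x)=\prod_{i\in C_s}(x-\alpha^i)$. Multiplication by $p$ permutes $C_s$ modulo $n$, and $\sigma$ sends $\alpha^i$ to $\alpha^{ip}$, so applying $\sigma$ to the coefficients of $f_s$ merely permutes its linear factors. Hence every coefficient of $f_s$ is fixed by $\sigma$ and therefore lies in $\mathbb{F}_p$, so $f_s\in\mathbb{F}_p[x]$.

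\textbf{Step 2.} Let $M$ be the minimal polynomial of $\alpha^s$ over $\mathbb{F}_p$. Since $M$ has coefficients in $\mathbb{F}_p$, we have $M(y)^p=M(y^p)$, so $M(\alpha^s)=0$ forces $M(\alpha^{sp^j})=0$ for every $j$. Thus every $\alpha^i$ with $i\in C_s$ is a root of $M$.

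\textbf{Step 3.} The roots $\alpha^i$, $i\in C_s$, are distinct because the exponents are distinct modulo $n$ and $\alpha$ has order exactly $n$. Therefore $f_s$ divides $M$. Conversely, $f_s$ is a monic polynomial over $\mathbb{F}_p$ vanishing at $\alpha^s$, so $M$ divides $f_s$, and hence $M=f_s$.

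Part (ii) then follows. The cosets $\{C_s\}_{s\in S}$ partition $\{0,\dots,n-1\}$, so grouping the linear factors of $x^n-1$ by coset gives $x^n-1=\prod_{s\in S}f_s=\prod_{s\in S}M_{\alpha^s}$. Each factor is irreducible because it is a minimal polynomial. The factors for distinct $s\in S$ are pairwise distinct because their root sets are disjoint. Uniqueness of factorization in $\mathbb{F}_p[x]$ makes this the canonical factorization.

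The only delicate point is the distinctness used in Step 3 and in the partition argument. Without it, a coset could contribute repeated roots, $f_s$ might not divide $M$, and the grouping in (ii) could overcount. This is exactly where $\gcd(n,p)=1$ is needed: it makes $x^n-1$ separable, which I will check via the derivative criterion before carrying out the grouping.
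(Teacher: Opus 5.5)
Your argument is correct, and it is the standard Frobenius-orbit proof given in the sources the paper cites; the paper itself offers no proof and simply refers to Lidl--Niederreiter and Huffman--Pless. One small refinement: distinctness of the $\alpha^i$ already follows from $\alpha$ having order exactly $n$, as your own Step 3 notes, so the derivative check is redundant. The hypothesis $\gcd(n,p)=1$ is really what guarantees that $\alpha$ exists and that each coset $C_s$ is well defined, i.e.\ that some $r$ with $sp^r\equiv s \pmod n$ exists.
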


\subsection{Hensel's Lifting: The Bridge to $\mathbb{Z}_{p^e}$}

To lift the factorization from $\mathbb{F}_p$ to the ring $\mathbb{Z}_{p^e}$, the standard tool is Hensel's Lemma. Since we are dealing with a local ring structure, we recall the specific form of the lifting property.

\begin{prop}[\cite{Lang1994}]\label{prop:hensel_lemma}
Let $f(x) \in \mathbb{Z}[x]$ be a polynomial and $a$ be an integer. Assume that $f(a) \equiv 0 \pmod{p^e}$ for some integer $e \geq 1$. If $f'(a) \not\equiv 0 \pmod p$, then there exists a unique integer $u_a \in \{0, \dots, p-1\}$ such that
$$f(a + u_a p^e) \equiv 0 \pmod{p^{e+1}}.$$
Explicitly, the lifting term $u_a$ is determined by:
\begin{equation}\label{eq:hensel_update}
u_a \equiv - \frac{f(a)/p^e}{f'(a)} \pmod p.
\end{equation}
\end{prop}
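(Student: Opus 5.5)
We prove Proposition~\ref{prop:hensel_lemma} with a first-order Taylor expansion of $f$ around $a$, reducing the lifting condition to a linear congruence modulo $p$. Since $f\in\mathbb{Z}[x]$, for any integer $h$ we can write
\begin{equation*}
f(a+h) = f(a) + f'(a)\,h + \sum_{k\ge 2} \frac{f^{(k)}(a)}{k!}\,h^k ,
\end{equation*}
where each coefficient $f^{(k)}(a)/k!$ is an integer. This holds because for $f(x)=\sum_j c_j x^j$ we have $f^{(k)}(a)/k! = \sum_j c_j\binom{j}{k}a^{j-k}$. This integrality is the one point that needs care: we are expanding over $\mathbb{Z}$, not over a field where division by $k!$ would be harmless. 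The binomial-coefficient expression settles it without any assumption on $p$.

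Next we substitute $h = u p^e$ with $u$ an integer. Every term with $k\ge 2$ is divisible by $p^{2e}$. Since $e\ge 1$, we have $2e\ge e+1$, so these terms all vanish modulo $p^{e+1}$. Hence
\begin{equation*}
f(a+up^e) \equiv f(a) + f'(a)\,u\,p^e \pmod{p^{e+1}}.
\end{equation*}
By hypothesis $p^e \mid f(a)$, so write $f(a) = p^e t$ with $t = f(a)/p^e\in\mathbb{Z}$. The right-hand side then becomes $p^e\bigl(t + f'(a)u\bigr)$. It vanishes modulo $p^{e+1}$ if and only if
\begin{equation*}
t + f'(a)\,u \equiv 0 \pmod p .
\end{equation*}

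Finally, because $f'(a)\not\equiv 0 \pmod p$ and $\mathbb{F}_p$ is a field, $f'(a)$ is invertible modulo $p$. The linear congruence therefore has exactly one solution class, $u \equiv -t\,f'(a)^{-1} \pmod p$. This is precisely formula~\eqref{eq:hensel_update}, with the quotient read as multiplication by the inverse modulo $p$. Taking the representative $u_a$ of this class in $\{0,\dots,p-1\}$ gives existence. Uniqueness within that range follows because any other admissible $u$ must lie in the same residue class modulo $p$.

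The argument has no real obstacle. The only delicate step is the integrality of the Taylor coefficients, handled above. One might also note that the analysis holds for all integers $u$, so the equivalence ``$f(a+up^e)\equiv 0 \pmod{p^{e+1}} \iff u\equiv u_a \pmod p$'' holds globally, and the stated uniqueness is its restriction to $\{0,\dots,p-1\}$.
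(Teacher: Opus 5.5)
Your proof is correct and complete. The paper gives no argument of its own and simply cites Lang; your first-order Taylor expansion is the standard proof behind that citation. It uses the integral coefficients $f^{(k)}(a)/k!$ and the fact that the $k\ge 2$ terms vanish because $p^{e+1}\mid p^{2e}$, which reduces the lift to the linear congruence $f(a)/p^e + f'(a)\,u \equiv 0 \pmod p$.

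Your argument also makes explicit that only the residue of $f'(a)$ modulo $p$ matters. That is what justifies the fixed multiplier $C \equiv -[V'(S_1^{(1)})]^{-1} \pmod p$ reused at every step of Algorithm~1, since $S_1^{(h)} \equiv S_1^{(1)} \pmod p$.
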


\noindent \textbf{Remark.} While Proposition~\ref{prop:hensel_lemma} guarantees the existence and uniqueness of the lift, applying it directly to polynomial factorization requires iterative computations for each coefficient. One of the main contributions of this paper is to bypass this coefficient-wise iteration by encapsulating the lifting logic into a single auxiliary polynomial $V(x)$.

\subsection{Dickson Polynomials: The Hidden Generator}

A key insight of our work involves the unexpected role of Dickson polynomials. Originally introduced in the context of permutation polynomials, and later analyzed for their factorization properties~\cite{Chou1997,Fitzgerald2005}, Dickson polynomials relate sums of powers to elementary symmetric polynomials.

\begin{defn}[\cite{Lidl1997}]\label{def:dickson}
Let $R$ be a commutative ring with identity. For $\gamma \in R$, the \textit{Dickson polynomial of the first kind} of degree $n$, denoted $D_n(x, \gamma)$, is defined as:
\begin{equation}\label{eq:dickson_formula}
D_n(x, \gamma) = \sum_{i=0}^{\lfloor n/2 \rfloor} \frac{n}{n-i} \binom{n-i}{i} (-\gamma)^i x^{n-2i}.
\end{equation}
These polynomials satisfy the fundamental recurrence relation:
$$D_n(x, \gamma) = x D_{n-1}(x, \gamma) - \gamma D_{n-2}(x, \gamma), \quad \text{for } n \geq 2,$$
with initial conditions $D_0(x, \gamma) = 2$ and $D_1(x, \gamma) = x$.
\end{defn}

The structural utility of Dickson polynomials in our context stems from their connection to power sums of roots, often referred to as Waring's Formula:

\begin{prop}[\cite{Lidl1997}]\label{prop:dickson_waring}
Let $x_1, x_2$ be indeterminates over a ring $R$. Then:
$$x_1^n + x_2^n = D_n(x_1 + x_2, x_1 x_2).$$
\end{prop}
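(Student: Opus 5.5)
We prove the Waring identity $x_1^n+x_2^n = D_n(x_1+x_2,\,x_1x_2)$ by strong induction on $n$, using only the recurrence from Definition~\ref{def:dickson}. Working in $R[x_1,x_2]$, set $s = x_1+x_2$, $t = x_1x_2$ and $P_n = x_1^n + x_2^n$. The goal is to show that $P_n$ and $D_n(s,t)$ satisfy the same second-order recurrence with the same two initial values. Then they coincide for every $n \ge 0$. This route avoids the closed binomial formula \eqref{eq:dickson_formula}, whose coefficients $\frac{n}{n-i}\binom{n-i}{i}$ must be read as integers mapped into $R$ and are awkward to manipulate directly.

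\textbf{Base cases.} For $n=0$ we have $P_0 = 1+1 = 2 = D_0(s,t)$. For $n=1$ we have $P_1 = x_1+x_2 = s = D_1(s,t)$.

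\textbf{Recurrence for $P_n$.} Each $x_j$ is a root of $z^2 - sz + t = (z-x_1)(z-x_2)$. Multiplying $x_j^2 = s x_j - t$ by $x_j^{n-2}$ gives $x_j^n = s\,x_j^{n-1} - t\,x_j^{n-2}$. Summing over $j=1,2$ yields
\[
P_n = s\,P_{n-1} - t\,P_{n-2}, \qquad n \ge 2.
\]
This is exactly the Dickson recurrence evaluated at $(x,\gamma) = (s,t)$.

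\textbf{Induction step.} Assume $P_m = D_m(s,t)$ for all $m<n$, with $n \ge 2$. Then
\[
P_n = s\,D_{n-1}(s,t) - t\,D_{n-2}(s,t) = D_n(s,t).
\]

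The only point needing care is whether the recurrence genuinely characterizes $D_n$ as defined by the closed formula \eqref{eq:dickson_formula}. The paper asserts this in Definition~\ref{def:dickson}, so we may take it as given.

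If an independent check is wanted, the standard route is Pascal-type identities. For $0<i<n/2$ one verifies
\[
\frac{n}{n-i}\binom{n-i}{i} = \frac{n-1}{n-1-i}\binom{n-1-i}{i} + \frac{n-2}{n-1-i}\binom{n-1-i}{i-1}
\]
as an identity of integers, and the boundary terms are checked separately. All coefficients are integers, namely $\binom{n-i}{i}+\binom{n-i-1}{i-1}$, so the identity is meaningful over any ring $R$. I expect this coefficient bookkeeping to be the only mildly fiddly part. Since the recurrence is supplied by the cited definition, the induction above is the whole proof.
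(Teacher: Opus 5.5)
Your proof is correct, and it takes a different route from the one the paper points to. The paper gives no argument of its own. It cites Lidl--Niederreiter and presents the identity as the two-variable instance of Waring's formula.

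\textbf{The Waring route.} One writes the power sum $x_1^n+x_2^n$ in terms of $\sigma_1=x_1+x_2$ and $\sigma_2=x_1x_2$. The Waring coefficient $\frac{(i_1+i_2-1)!\,n}{i_1!\,i_2!}$ with $i_1=n-2i$, $i_2=i$ is exactly $\frac{n}{n-i}\binom{n-i}{i}$. So the closed formula \eqref{eq:dickson_formula} comes out directly, and the recurrence is a consequence.

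\textbf{Your route.} You run in the opposite direction, using nothing beyond $x_j^2=s x_j-t$. This makes the argument fully elementary and valid verbatim over any commutative ring. What it literally proves, though, is equality with the sequence defined by the recurrence and the initial values $D_0=2$, $D_1=x$. Here $D_0=2$ has to be taken as a convention, since the closed formula reads $0/0$ at $n=0$.

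\textbf{The link between them.} Matching the recursive sequence with the closed formula is exactly the coefficient identity you isolate, and your check of it is sound. Writing $c(n,i)=\binom{n-i}{i}+\binom{n-i-1}{i-1}$, the relation $c(n,i)=c(n-1,i)+c(n-2,i-1)$ follows from two applications of Pascal's rule. The boundary cases are immediate: $i=0$ gives $1=1+0$, and $i=n/2$ for even $n$ gives $2=0+2$.

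\textbf{What each buys.} The Waring route yields the closed form for free, but at the cost of invoking the general Newton--Waring machinery. Yours is self-contained and yields the recurrence form directly. That recurrence is the form the paper actually uses later, as $A_i=A_1A_{i-1}-A_{i-2}$ in Theorem~\ref{thm:base_factorization}.
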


\noindent \textbf{Observation.} While typically used to study field permutations, we will demonstrate in Section \ref{sec:framework} that $D_{p-1}(x, 1)$ acts as a \textbf{coefficient generator} for the factorization of $x^{p+1}-1$, providing a direct algebraic link between the field characteristic $p$ and the lifted coefficients in $\mathbb{Z}_{p^e}$.

\subsection{Homogeneous Weight and Gray Map}
\label{sec:gray_map}

To evaluate the error-correcting performance in the ambient space, we employ the generalized Gray map, which links codes over rings to codes over finite fields. 
For the ring $\mathbb{Z}_{p^e}$, we utilize the \textbf{homogeneous weight} $w_{\text{hom}}$. 
In the specific case of $e=2$, which corresponds to our computational experiments, the homogeneous weight of an element $u \in \mathbb{Z}_{p^2}$ is explicitly defined as:
\begin{equation}
    w_{\text{hom}}(u) = 
    \begin{cases} 
    0 & \text{if } u = 0, \\
    p & \text{if } u \in p\mathbb{Z}_{p^2} \setminus \{0\} \quad (\text{Zero Divisors}), \\
    p-1 & \text{if } u \in \mathbb{Z}_{p^2}^\times \quad (\text{Units}).
    \end{cases}
\end{equation}
This weight function serves as the algebraic counterpart to the Hamming weight in the image space under the Gray map. 
As introduced by Constantinescu and Heise~\cite{constantinescu1997metric}, this metric is naturally adaptable to the structure of chain rings.

Note that we adopt the un-normalized form of the homogeneous weight here. This choice is deliberate to ensure that $w_{\text{hom}}(u)$ equals the Hamming weight of its Gray image $\Phi(u)$ in the ambient space $\mathbb{F}_p^p$, thereby allowing us to measure the code's minimum distance as an integer value.

For the specific case of $e=2$ (i.e., $\mathbb{Z}_{p^2}$), any element $u \in \mathbb{Z}_{p^2}$ can be uniquely represented as $u = a + bp$, where $a, b \in \mathbb{F}_p$. The Gray map $\Phi: \mathbb{Z}_{p^2} \to \mathbb{F}_p^p$ is defined as~\cite{greferath1999gray}:
\begin{equation}
    \Phi(a + bp) = (b, b+a, b+2a, \dots, b+(p-1)a).
\end{equation}
This map extends coordinate-wise to $\mathbb{Z}_{p^2}^n \to \mathbb{F}_p^{np}$. 
A crucial property of $\Phi$ is that it is a distance-preserving isometry from $(\mathbb{Z}_{p^2}^n, w_{\text{hom}})$ to $(\mathbb{F}_p^{np}, w_{\text{Hamming}})$~\cite{greferath1999gray}. Consequently, a cyclic code $\mathcal{C}$ of length $n$ over $\mathbb{Z}_{p^2}$ is mapped to a linear code of length $N = np$ over $\mathbb{F}_p$, preserving the minimum distance structure.

\section{The Explicit Factorization Framework}
\label{sec:framework}

In this section, we present our main structural results. We decompose the factorization problem into two stages: first, establishing a generative mechanism for the coefficients over the base field $\mathbb{F}_p$; and second, introducing the novel $V(x)$ polynomial to explicitly lift these coefficients to the ring $\mathbb{Z}_{p^e}$.

\subsection{Base Layer: Generating Coefficients via Dickson Mapping}

Let $p$ be an odd prime. The factorization structure depends on the specific congruence class of $p$ modulo 4. While the majority of factors are quadratic, there are specific linear and quadratic factors corresponding to the fixed points of the Frobenius automorphism.

\begin{thm}\label{thm:base_factorization}
The irreducible factorization of $x^{p+1}-1$ over $\mathbb{F}_p$ is given by:
\begin{equation}
x^{p+1}-1 = (x-1)(x+1) \Psi(x) \prod_{i \in S'} (x^2 - A_i x + 1),
\end{equation}
where:
\begin{itemize}
    \item The linear factors $(x-1)$ and $(x+1)$ correspond to the cosets $\{0\}$ and $\{(p+1)/2\}$.
    \item The term $\Psi(x)$ handles the case of $x^2+1$:
    $$ \Psi(x) = \begin{cases} x^2+1 & \text{if } p \equiv 3 \pmod 4 \\ \text{1 (empty product)} & \text{if } p \equiv 1 \pmod 4 \end{cases} $$
    \item $S'$ is the set of representatives for the remaining cosets of size 2, defined as $S' = \{1, 2, \dots, \frac{p+1}{2}\} \setminus \{\frac{p+1}{4}\}$ if $p \equiv 3 \pmod 4$, and $S' = \{1, 2, \dots, \frac{p+1}{2}\}$ if $p \equiv 1 \pmod 4$. Note that the index $\frac{p+1}{4}$ corresponds to the factor $x^2+1$ (where $A_i=0$). The coefficients $A_i$ for these factors are explicitly generated as follows:
    \begin{enumerate}
        \item \textbf{Seed Generation:} Let $f(x) = x^2 - a_1 x + a_2 \in \mathbb{F}_p[x]$ be a primitive polynomial (where $a_1$ and $a_2$ correspond to the trace and norm of the primitive element $\beta \in \mathbb{F}_{p^2}$, respectively). The generator $A_1$ is obtained via:
        $$A_1 = D_{p-1}(a_1, a_2).$$
        \item \textbf{Recursive Expansion:} For $i \ge 2$, coefficients are generated by the recurrence:
        $$A_i = D_i(A_1, 1) = A_1 A_{i-1} - A_{i-2}.$$
    \end{enumerate}
\end{itemize}
\end{thm}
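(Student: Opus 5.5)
The plan is to work in $\mathbb{F}_{p^2}$ and reduce everything to Proposition~\ref{prop:cyc_coset} with $n=p+1$, using Proposition~\ref{prop:dickson_waring} twice: once to identify $A_1$, once to identify $A_i$.

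\textbf{Step 1: the root of unity and the seed.} Let $\beta\in\mathbb{F}_{p^2}$ be a root of the primitive polynomial $f$. Then $\beta$ has order $p^2-1$, its conjugate is $\beta^p$, $a_1=\beta+\beta^p$ and $a_2=\beta^{p+1}$. Put $\zeta=\beta^{p-1}$. This element has order exactly $p+1$, so it is a primitive $(p+1)$-th root of unity and we may take $\alpha=\zeta$ in Proposition~\ref{prop:cyc_coset}.

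Apply Proposition~\ref{prop:dickson_waring} with $x_1=\beta$ and $x_2=\beta^p$:
$$D_{p-1}(a_1,a_2)=\beta^{p-1}+\beta^{p(p-1)}.$$
Since $\beta^{p^2}=\beta$, we have $\beta^{p(p-1)}=\beta^{1-p}=\zeta^{-1}$. Hence $A_1=\zeta+\zeta^{-1}$.

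\textbf{Step 2: the recursion.} Apply Proposition~\ref{prop:dickson_waring} again with $x_1=\zeta$ and $x_2=\zeta^{-1}$, whose product is $1$. This gives $D_i(A_1,1)=\zeta^i+\zeta^{-i}$. The Dickson recurrence in Definition~\ref{def:dickson} then yields $A_i=A_1A_{i-1}-A_{i-2}$, with $A_0=2$ consistent with $D_0=2$. So $A_i=\zeta^i+\zeta^{-i}$ for all $i$.

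\textbf{Step 3: the cosets.} Since $p\equiv -1 \pmod{p+1}$, every coset has the form $C_i=\{i,-i\}$.
\begin{itemize}
\item $|C_i|=1$ exactly when $2i\equiv 0\pmod{p+1}$, i.e.\ $i=0$ or $i=(p+1)/2$. These give $\zeta^0=1$ and $\zeta^{(p+1)/2}=-1$, hence the linear factors $x-1$ and $x+1$.
\item Every other coset has size $2$. By Proposition~\ref{prop:cyc_coset}(i), its minimal polynomial is
$$(x-\zeta^i)(x-\zeta^{-i})=x^2-A_ix+1.$$
It is irreducible over $\mathbb{F}_p$ because the order of $\zeta^i$ does not divide $p-1$. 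Indeed, $\zeta^i$ has order $(p+1)/\gcd(i,p+1)>2$, while $\gcd(p+1,p-1)=2$.
\item A full set of representatives for the size-$2$ cosets is $1\le i\le (p-1)/2$.
\item When $p\equiv 3\pmod 4$, the index $i=(p+1)/4$ lies in this range. Then $\zeta^i$ has order $4$, so $\zeta^{-i}=-\zeta^i$ and $A_i=0$; this factor is $\Psi(x)=x^2+1$. When $p\equiv 1\pmod 4$, $(p+1)/4$ is not an integer and no such factor occurs.
\end{itemize}
Proposition~\ref{prop:cyc_coset}(ii) then gives the stated product.

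\textbf{Main obstacle: the index set.} I expect the only real difficulty to be reconciling the cosets with $S'$ as literally written. The upper endpoint $(p+1)/2$ belongs to the size-$1$ coset already accounted for by $x+1$. At that index $A_{(p+1)/2}=-2$, and the corresponding "quadratic" would be $(x+1)^2$, which is not a factor. I will therefore prove the theorem with $S'$ read as $\{1,\dots,\frac{p-1}{2}\}$, with $\frac{p+1}{4}$ removed when $p\equiv 3\pmod 4$. I will note explicitly that this range gives exactly the $(p-1)/2$ distinct irreducible factors: $\frac{p-1}{2}$ quadratics when $p\equiv1\pmod4$, and $\frac{p-3}{2}$ quadratics plus $\Psi(x)$ when $p\equiv3\pmod4$.

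As a check, the degrees add up: $2+2\cdot\frac{p-1}{2}=p+1$. Distinctness of the factors follows from the distinctness of the cosets.
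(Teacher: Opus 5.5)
Your proposal is correct and follows essentially the paper's route: you apply Waring's formula to $\beta,\beta^p$ to identify $A_1=\zeta+\zeta^{-1}$ with $\zeta=\beta^{p-1}$ of order $p+1$, apply it again to $\zeta,\zeta^{-1}$ for the recurrence, and read off the factor shapes from the coset structure $C_i=\{i,-i\}$. Your correction of the index set to $\{1,\dots,\frac{p-1}{2}\}$ (with $\frac{p+1}{4}$ removed when $p\equiv 3 \pmod 4$) is justified, because the literal $S'$ includes $\frac{p+1}{2}$, which contributes $(x+1)^2$ and raises the total degree to $p+3$; the corrected range also matches the paper's own Example 1, a point the paper's proof passes over.
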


\begin{proof}
The factorization structure in (i) and the specific linear/quadratic forms follow directly from the size of the $p$-cyclotomic cosets modulo $p+1$, which is at most 2 since $p^2 \equiv 1 \pmod{p+1}$.

We focus on proving the generative mechanism for $A_i$. Let $\beta$ be a root of the primitive polynomial $f(x) = x^2 - a_1 x + a_2$ in $\mathbb{F}_{p^2}$. Since $f(x)$ is primitive, $\beta$ is a generator of $\mathbb{F}_{p^2}^*$. We set $\alpha = \beta^{p-1}$. Since the order of $\beta$ is $p^2-1$, the order of $\alpha$ is $(p^2-1)/(p-1) = p+1$, making $\alpha$ a primitive $(p+1)$-th root of unity.

The coefficient for the $i$-th factor is $A_i = \alpha^i + \alpha^{-i}$. Recall the fundamental property of Dickson polynomials (Waring's formula): $u^k + v^k = D_k(u+v, uv)$.
\begin{itemize}
    \item \textbf{For the seed $A_1$:} We have $A_1 = \alpha + \alpha^{-1} = \beta^{p-1} + (\beta^{-1})^{p-1}$. Note that in $\mathbb{F}_{p^2}$, the conjugate of $\beta$ is $\beta^p$. Thus, the term $\alpha^{-1}$ corresponds to the conjugate action on the primitive root relative to the norm. Specifically, we observe:
    $$A_1 = \beta^{p-1} + (\beta^p)^{p-1} = D_{p-1}(\beta+\beta^p, \beta^{p+1}).$$
    From Vieta's formulas on $f(x)$, we have $\beta + \beta^p = a_1$ and $\beta^{p+1} = a_2$. Therefore, $A_1 = D_{p-1}(a_1, a_2)$.
    
    \item \textbf{For the recursion:} Since $A_i = \alpha^i + (\alpha^{-1})^i$ and $\alpha \cdot \alpha^{-1} = 1$, we simply apply the Dickson property again with $u=\alpha, v=\alpha^{-1}$:
    $$A_i = D_i(\alpha + \alpha^{-1}, \alpha \alpha^{-1}) = D_i(A_1, 1).$$
    This satisfies the recurrence $D_i(x, 1) = x D_{i-1}(x, 1) - D_{i-2}(x, 1)$.
\end{itemize}
This confirms that the entire set of coefficients is deterministically generated by the Dickson mapping of the primitive polynomial's coefficients.
\end{proof}

This theorem transforms the factorization from a search problem into a \textit{generation} problem, providing a compact representation of the factors.

\subsection{Structural Lifting: The $V(x)$ Mechanism}

The classical approach to factor over $\mathbb{Z}_{p^e}$ involves lifting each quadratic factor $x^2 - A_i x + 1$ individually. However, this overlooks a crucial symmetry in the solution space derived from the properties of primitive roots.

Consider the relationship between the coefficient $A_i$ and the coefficient for the symmetric coset index $j = \frac{p+1}{2} - i$. Since $\alpha^{\frac{p+1}{2}} = -1$, we can explicitly derive:
\begin{equation}
    A_j = \alpha^j + \alpha^{-j} = \alpha^{\frac{p+1}{2}-i} + \alpha^{-(\frac{p+1}{2}-i)} = (-1)\alpha^{-i} + (-1)\alpha^i = -A_i.
\end{equation}
This derivation confirms that the factors $x^2 - A_i x + 1$ and $x^2 + A_i x + 1$ generally appear in pairs. Consequently, the algebraic constraints on $A_i$ are most naturally expressed through the invariant of this symmetry pairing:
\begin{equation}
    (x^2 - A_i x + 1)(x^2 + A_i x + 1) = x^4 + (2 - A_i^2)x^2 + 1.
\end{equation}
This motivates us to define the \textbf{structural variable} $S_i = 2 - A_i^2$. By shifting our focus from lifting $A_i$ to lifting $S_i$, we transform the problem into a domain where the constraints become linear and recursive.

\subsubsection*{Heuristic Observation: The Coefficient Triangle}

Our discovery of $V(x)$ was not derived from abstract principles \textit{ab initio}, but from an empirical investigation of the lifting constraints.
We observed that for any prime $p$, the structural parameters $S_i$ satisfy a unifying polynomial equation $V(S) \equiv 0 \pmod p$.
While for small primes like $p=5$, the polynomial is trivial ($S-1$), calculating for larger primes revealed a striking pattern. The coefficients of these polynomials were not random integers; they mirrored a specific diagonal slice of Pascal's Triangle (binomial coefficients).

Let us illustrate this evolution with $p=13$, $p=17$, and $p=29$. Note how the coefficients (top line) perfectly align with the alternating binomial forms (bottom line):

\begin{itemize}
    \item \textbf{Case $p=13$} ($k=3$, degree 3):
    $$
    \begin{aligned}
    V(x) &= 1 \cdot x^3 - 1 \cdot x^2 - 2 \cdot x + 1 \\
         &= \binom{3}{0} x^3 - \binom{2}{0} x^2 - \binom{2}{1} x + \binom{1}{1}
    \end{aligned}
    $$

    \item \textbf{Case $p=17$} ($k=4$, degree 4):
    $$
    \begin{aligned}
    V(x) &= 1 \cdot x^4 - 1 \cdot x^3 - 3 \cdot x^2 + 2 \cdot x + 1 \\
         &= \binom{4}{0} x^4 - \binom{3}{0} x^3 - \binom{3}{1} x^2 + \binom{2}{1} x + \binom{2}{2}
    \end{aligned}
    $$

    \item \textbf{Case $p=29$} ($k=7$, degree 7):
    The polynomial $V(x) = x^7 - x^6 - 6x^5 + 5x^4 + 10x^3 - 6x^2 - 4x + 1$ maps perfectly to:
    $$
    \dots - \binom{6}{1} x^5 + \binom{5}{1} x^4 + \binom{5}{2} x^3 - \binom{4}{2} x^2 - \binom{4}{3} x + \binom{3}{3}
    $$
\end{itemize}

This visual alignment provided the decisive intuition: the coefficients are generated by terms of the form $(-1)^s \binom{n-s}{m}$. This ``coefficient triangle'' led us to formulate the general Definition~\ref{def:Vx}, transforming a computational observation into a rigorous algebraic identity.
\subsubsection*{The Auxiliary Polynomial $V(x)$}

Based on this observation, we introduce the polynomial $V(x) \in \mathbb{Z}[x]$, which serves as the core engine of our lifting algorithm.

\begin{defn}[The Structural Polynomial $V(x)$]\label{def:Vx}
Let $k = \lfloor \frac{p}{4} \rfloor$. We define $V(x) = \sum_{r=0}^{k} V_r x^{k-r}$ with coefficients determined by the prime $p$:
\begin{itemize}
    \item If $p = 4k + 3$:
    $$V_{2i} = (-1)^i \binom{k-i}{i}, \quad V_{2i+1} = 0.$$
    \item If $p = 4k + 1$:
    $$V_{2i} = (-1)^i \binom{k-i}{i}, \quad V_{2i+1} = (-1)^{i+1} \binom{k-i-1}{i}.$$
\end{itemize}
The construction of these coefficients relies on combinatorial identities discussed in \cite{Shi2009}.
\end{defn}

The significance of $V(x)$ lies in the following isomorphism theorem, which connects the algebraic lifting process to the roots of $V(x)$.

\begin{thm}[Structural Isomorphism]\label{thm:isomorphism}
Recall that $S_i = 2 - A_i^2$ is our structural variable. 
\emph{Remarkably, by the Dickson property $A_{2i} = A_i^2 - 2$, this transformation implies an intrinsic connection: $S_i = -A_{2i}$.}
This mapping establishes a structural isomorphism between the factorization lifting problem and the root-finding problem: the condition for lifting $S_i^{(h)}$ to step $h+1$ is isomorphic to finding roots of $V(x)$ in $\mathbb{Z}_{p^h}$:
\begin{equation}
    V(S_i^{(h)}) \equiv 0 \pmod{p^h}.
\end{equation}
The lifted value is $S_i^{(h+1)} = S_i^{(h)} + u_i V(S_i^{(h)})$, where $u_i \equiv -[V'(S_i^{(1)})]^{-1} \pmod p$.
\end{thm}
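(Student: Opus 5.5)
The plan is to identify $V(x)$ with the polynomial whose roots, in the Teichmüller lift, are exactly the values $S_i=-A_{2i}$. Once that is done, the lifting statement is a direct application of Proposition~\ref{prop:hensel_lemma} to $V$. First I record the claimed identity. By Proposition~\ref{prop:dickson_waring}, $A_{2i}=\alpha^{2i}+\alpha^{-2i}=D_2(A_i,1)=A_i^2-2$, so $S_i=2-A_i^2=-A_{2i}$. The element $\zeta=\alpha^2$ is a primitive $m$-th root of unity with $m=\frac{p+1}{2}$, so every $S_i$ has the form $-(\zeta^j+\zeta^{-j})$.

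The core step is to show that
$$V(x)=\pm\prod_{j}\bigl(x+\zeta^j+\zeta^{-j}\bigr),$$
where $j$ runs over the distinct values of $\zeta^j+\zeta^{-j}$ with $\zeta^j\neq 1$, and, when $p\equiv 3\pmod 4$, also $\zeta^j\neq -1$. I would use the Dickson polynomials of the second kind, $E_n(y,1)=\sum_i(-1)^i\binom{n-i}{i}y^{n-2i}$, which satisfy $E_n(u+u^{-1},1)=\frac{u^{n+1}-u^{-n-1}}{u-u^{-1}}$. The two cases go as follows.
\begin{itemize}
\item If $p=4k+1$, then $m=2k+1$. The telescoping identity $1+u+\dots+u^{2k}=u^k\bigl(E_k(y,1)+E_{k-1}(y,1)\bigr)$ with $y=u+u^{-1}$ holds. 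Substituting $y=-x$ and comparing coefficients, using $V_{2i}$ for the $E_k$ part and $V_{2i+1}$ for the $E_{k-1}$ part with the sign from $(-x)^{k-1-2i}$, should reproduce Definition~\ref{def:Vx} exactly. The roots of $E_k+E_{k-1}$ in $y$ are then precisely $\zeta^j+\zeta^{-j}$ for $1\le j\le k$.
\item If $p=4k+3$, then $m=2k+2$ is even. I would use $\frac{u^{m}-1}{u^2-1}$, which removes $\zeta=\pm1$, and the fact that $E_{k}$ has only the parity of $k$. This gives the vanishing odd coefficients $V_{2i+1}=0$.
\end{itemize}
I would sanity-check both cases against the displayed examples $p=13,17,29$.

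Next I would pass to $\mathbb{Z}_{p^e}$. Let $\alpha$ be the Teichmüller lift in $GR(p^e,2)$ of order $p+1$, and define $A_i,S_i$ by the same formulas there. Frobenius sends $\alpha\mapsto\alpha^p=\alpha^{-1}$, so each $S_i$ is Frobenius-fixed and lies in $\mathbb{Z}_{p^e}$. By the identity above, $V(S_i)=0$ exactly in $\mathbb{Z}_{p^e}$. Moreover, $x^4+S_ix^2+1$ then divides $x^{p+1}-1$ over $\mathbb{Z}_{p^e}$, since its roots are $\pm\alpha^{\pm i}$.

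The roots of $V$ modulo $p$ are simple. Indeed, the $k$ values $\zeta^j+\zeta^{-j}$ are pairwise distinct in $\mathbb{F}_{p^2}$: equality would force $\zeta^{j}=\zeta^{\pm j'}$, which is excluded by the choice of range. Hence $V'(S_i^{(1)})\not\equiv 0\pmod p$. By Proposition~\ref{prop:hensel_lemma}, each root mod $p$ lifts uniquely to a root mod $p^h$, and that root must coincide with the true $S_i \bmod p^h$. This gives the equivalence: the lifted quadratic pair is correct at level $h$ iff $V(S_i^{(h)})\equiv 0\pmod{p^h}$. For the update rule, note that $S_i^{(h)}\equiv S_i^{(1)}\pmod p$ implies $V'(S_i^{(h)})\equiv V'(S_i^{(1)})\pmod p$. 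The Hensel correction \eqref{eq:hensel_update} then becomes $S^{(h+1)}=S^{(h)}-[V'(S^{(1)})]^{-1}V(S^{(h)})$, with the inverse taken mod $p$ acting on $V(S^{(h)})\in p^h\mathbb{Z}$, which is the stated formula.

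The main obstacle I expect is the exact coefficient matching between Definition~\ref{def:Vx} and the second-kind Dickson expansion. Signs under $y\mapsto -x$, the offset between $E_k$ and $E_{k-1}$, and the even-$m$ case all have to come out exactly; the overall sign $\pm$ in front of the product should be fixed by matching the leading coefficient $V_0=1$. I would settle this first by proving the generating-function identity $\sum_n E_n(y,1)t^n=(1-yt+t^2)^{-1}$ and reading off the coefficients, then checking $p=13$ and $p=19$ by hand.
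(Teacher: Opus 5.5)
Your proposal is correct, and it reaches the same underlying fact as the paper, namely $V(x)=\prod_{j=1}^{k}\bigl(x+\zeta^j+\zeta^{-j}\bigr)$, but by a different route.

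\textbf{The paper's route.} Appendix A takes the lifted factorization $\prod_{i=1}^{k}(x^4+S_i^{(h)}x^2+1)\equiv\sum_{j=0}^{k}x^{4j}\pmod{p^h}$ (for $p=4k+3$) as given. It compares coefficients to get a unitriangular system $\sum_{i=0}^{j}\binom{k-2i}{j-i}W_{2i}\equiv 1$ in the even signed elementary symmetric functions of the $S_i$. It then checks that $V_{2i}=(-1)^i\binom{k-i}{i}$ solves this system, using an alternating binomial identity it cites without proof, and finishes with Vieta. The case $p=4k+1$ is declared analogous.

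\textbf{Your route.} That binomial identity is exactly the coefficientwise form of $E_k(z+z^{-1},1)=z^k+z^{k-2}+\dots+z^{-k}$. Your closed form for $E_n(u+u^{-1},1)$, combined with the factorizations of $(u^m-1)/(u-1)$ and $(u^m-1)/(u^2-1)$, therefore replaces the unproved lemma. It also treats both residue classes uniformly, giving $V=E_k(x,1)$ for $p=4k+3$ and $V=E_k(x,1)-E_{k-1}(x,1)$ for $p=4k+1$. Your sign bookkeeping works out; for $p=13$ this is $x^3-2x-(x^2-1)$.

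Your argument also covers several points the appendix leaves open:
\begin{itemize}
\item The parity of $E_k$ forces $V_{2i+1}=0$ directly. The paper's displayed system constrains only the even $W_{2i}$ and never uses the $x^{4j+2}$ equations that make the odd ones vanish.
\item The Teichm\"uller lift in $GR(p^e,2)$ proves that the lifted factors have the shape $x^4+S_ix^2+1$ with $S_i\in\mathbb{Z}_{p^e}$ and $V(S_i)=0$ exactly. The paper presupposes this shape instead.
\item The distinctness of the $\bar\zeta^j+\bar\zeta^{-j}$ gives $V'(S_i^{(1)})\not\equiv 0\pmod p$. Together with $V'(S_i^{(h)})\equiv V'(S_i^{(1)})\pmod p$, this yields uniqueness of the lifted root and the stated update rule. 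The appendix addresses neither.
\end{itemize}

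In exchange, the paper's approach stays entirely inside $\mathbb{Z}_{p^h}[x]$ using only coefficient comparison, with no Galois-ring machinery.

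Two small steps in your argument should be written out. First, $u^m-1=\prod_j(u-\zeta^j)$ holds in $GR(p^e,2)[u]$ because the differences $\zeta^j-\zeta^{j'}$ are units. Second, the substitution $y\mapsto u+u^{-1}$ is injective on polynomials, which is what turns the Laurent identity into a polynomial identity in $y$.
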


\begin{proof}
The theorem rests on establishing that $V(x)$, as defined via specific binomial coefficients, satisfies the recurrence relations derived from the elementary symmetric polynomials of the roots $S_i$. This connection essentially transforms the Hensel lifting condition into a root-finding problem for $V(x)$.
For the sake of flow and readability, we defer the rigorous algebraic derivation involving Vieta's formulas and the specific combinatorial identity to \textbf{Appendix A}.
\end{proof}

This theorem is the heart of our contribution. It implies that \textbf{the factorization of $x^{p+1}-1$ is explicitly encoded in the roots of the fixed polynomial $V(x)$}. We do not need to lift polynomials; we only need to lift the roots of $V(x)$.

\subsection{The Explicit Algorithm}

Combining the base layer generation and the $V(x)$ mechanism, we propose the following deterministic algorithm.

\noindent\textbf{Algorithm 1: Explicit Factorization via $V(x)$}
\begin{enumerate}
    \item \textbf{Seed Discovery (over $\mathbb{F}_p$):}
    \begin{itemize}
        \item Select a primitive polynomial $f(x)$ and compute the seed $A_1 = D_{p-1}(a_1, a_2)$.
        \item Compute the initial structural variable $S_1^{(1)} = 2 - (A_1^{(1)})^2 \pmod p$.
    \end{itemize}
    
    \item \textbf{Pre-computation:}
    \begin{itemize}
        \item Construct $V(x)$ according to Definition~\ref{def:Vx}.
        \item Compute the invariant update factor $C \equiv -[V'(S_1^{(1)})]^{-1} \pmod p$.
    \end{itemize}
    
    \item \textbf{Lifting Loop (for $h = 1$ to $e-1$):}
    \begin{itemize}
        \item Lift \textbf{only the single seed} $S_1$ via $V(x)$-Newton iteration:
        \item Compute the lifting error: $\Delta = V(S_1^{(h)}) / p^h$.
        \item Update the structural parameter: $S_1^{(h+1)} = S_1^{(h)} + (C \cdot \Delta \pmod p) \cdot p^h$.
    \end{itemize}
    
    \item \textbf{Expansion at Target Precision:}
    \begin{itemize}
        \item Recover $A_1^{(e)}$ from $S_1^{(e)}$ (see Remark below).
        \item Generate all remaining coefficients via the Dickson recurrence: $A_i^{(e)} = D_i(A_1^{(e)}, 1)$.
    \end{itemize}
    
    \item \textbf{Output:} The factors $x^2 - A_i^{(e)} x + 1$ over $\mathbb{Z}_{p^e}$.
\end{enumerate}

This algorithm lifts only a \textbf{single scalar value} $S_1$ through all $e$ layers, deferring the full coefficient expansion to the final precision. This avoids the complexity of polynomial modular arithmetic entirely, reducing the problem to simple integer arithmetic.

\begin{rem}[Efficient Recovery of $A_1^{(e)}$]
The final step recovers the factorization coefficient $A_1^{(e)}$ from the structural variable $S_1^{(e)}$. Since $S_1 = 2 - A_1^2$, solving for $A_1$ is equivalent to finding a square root of $2 - S_1$ in $\mathbb{Z}_{p^e}$. Instead of employing generic modular square root algorithms (e.g., lifting Tonelli-Shanks solutions via Hensel's Lemma~\cite{Cohen1993}), which can be computationally intensive, we leverage the \textbf{already known} base layer coefficient $A_1^{(1)} \in \mathbb{F}_p$. By treating $A_1^{(1)}$ as an initial approximation, we can apply \textbf{Newton's method} to lift the root to precision $e$ with quadratic convergence. This aligns the complexity of the recovery step with the rest of the lifting process ($O(e)$ scalar operations), ensuring the entire algorithm remains strictly linear in both $p$ and $e$.
\end{rem}

\subsection{Discussion: Efficiency, Comparisons, and Implications}
\label{subsec:discussion}

Our framework offers a distinct departure from general-purpose factorization algorithms. Here we discuss its computational characteristics, contrast it with established methods, and highlight its relevance to modern cryptographic security.

\textbf{Dependency on Primitive Polynomials.}
The initialization of our algorithm (Step 1) relies on the availability of a primitive polynomial of degree 2 over $\mathbb{F}_p$ to generate the seed $A_1$. This dependency is computationally negligible given the high density of primitive polynomials, which allows for efficient random search even for large primes~\cite{Lidl1997}.

\textbf{Comparison with General Solvers (Finite Fields).} 
It is illuminating to contrast our framework with the classical \textbf{Berlekamp's algorithm}~\cite{Berlekamp1968}, the universal standard for factoring polynomials over finite fields. Berlekamp's method relies on linear algebra to compute the kernel of the Frobenius map, typically involving computationally expensive matrix operations ($O(n^3)$) and probabilistic splitting steps. Specifically, the \textbf{Cantor-Zassenhaus (CZ) algorithm}~\cite{Cantor1981} separates factors by computing GCDs with random polynomials throughout the recursion, introducing non-determinism at every stage.

In the specific context of $x^{p+1}-1$, however, our approach exploits the inherent \textbf{Hermitian symmetry}. By identifying the factorization coefficients $A_i$ as \textbf{traces} generated by the recursive structure of \textbf{Dickson polynomials}, we replace the generic matrix reductions of Berlekamp's method with a deterministic \textbf{scalar recurrence}.
This represents a fundamental \textbf{structural shortcut}: instead of ``discovering'' the factors through probabilistic linear algebra, we ``generate'' them directly from the field's arithmetic structure.

\textbf{Comparison with Lifting Methods (Rings).}
For lifting to $\mathbb{Z}_{p^e}$, methods based on Lattice Reduction (LLL) (e.g., Van Hoeij~\cite{Hoeij2002}) are powerful but computationally excessive for local rings. Similarly, McGuire's approach via Newton sums~\cite{McGuire2001} avoids coefficient arithmetic but necessitates iterative lifting steps ($k \to 2k$) and complex Newton-Girard conversions.
Our $V(x)$ mechanism acts as a \textbf{closed-form solution} that effectively ``short-circuits'' these iterations. It transforms the problem into a simple scalar recurrence, avoiding matrices, lattices, and polynomial arithmetic entirely.

\textbf{Complexity, Efficiency, and Granularity.}
Our algorithmic framework offers a versatile trade-off between total throughput and structural flexibility, supporting two distinct operational modes:

\textit{1. Global Linear Efficiency (The "Generator" Mode):} 
For full factorization, we leverage the recursive nature of Dickson polynomials. Instead of lifting all factors individually, we lift a single generator seed $S_1$ to $A_1$ and then derive the remaining coefficients via the scalar recurrence $A_i = A_1 A_{i-1} - A_{i-2}$. This optimization dramatically reduces the total time complexity from $O(e \cdot k^2)$ to $\mathbf{O(e \cdot k)} \approx \mathbf{O(e \cdot p)}$. This \textbf{linear complexity} explains the flat performance curve observed in our experiments, where our engine outperforms standard $O(p^2)$ algorithms by orders of magnitude for large primes.

\textit{2. Structural Granularity (The "Targeted" Mode):} 
Beyond raw speed, this generation-based paradigm offers a unique architectural advantage: \textbf{independence}. Unlike monolithic algorithms that must compute the entire factorization structure globally, our mechanism allows for the \textbf{targeted lifting} of any specific factor $S_i$ without computing the others. This decoupling opens the door for massive parallelism, a flexibility that traditional iterative lifting lacks.

\textbf{Implications for Post-Quantum Security.}
The determinism and speed of our algorithm have direct implications for the security of lattice-based cryptography constructed from LCD codes. As highlighted by Nishimura et al.~\cite{Nishimura2025}, the security of such systems against \textbf{Hull Attacks} depends on the precise algebraic decomposition of the underlying polynomial. Our explicit algorithm allows for the rapid identification of factors with specific symmetric properties (i.e., self-reciprocal factors), providing a crucial tool for both cryptanalysis and secure system design.

\section{Algorithmic Implementation and Verification}
\label{sec:examples}

To demonstrate the efficiency and structural clarity of our framework, we apply Algorithm 1 to two concrete examples. These examples illustrate how the $V(x)$ mechanism bypasses polynomial modular arithmetic, reducing the lifting process to simple integer operations.

\subsection{Example 1: Factorization of $x^{14}-1$ over $\mathbb{Z}_{13^2}$}

Let $p=13$ and $n=14$. We aim to factor $x^{14}-1$ over the ring $\mathbb{Z}_{169}$.
Here $k = \lfloor 13/4 \rfloor = 3$. This implies the auxiliary polynomial $V(x)$ will be of degree 3.

\noindent \textbf{Step 1: Initialization over $\mathbb{F}_{13}$.}
We select a primitive polynomial $f(x) = x^2 + x + 2$ over $\mathbb{F}_{13}$.
Using the Dickson mapping (Theorem~\ref{thm:base_factorization}), the generator coefficient is $A_1 = D_{12}(-1, 2) \equiv 5 \pmod{13}$.
From the recurrence $A_i = A_1 A_{i-1} - A_{i-2}$, we generate the remaining coefficients:
\[
A_2 \equiv 10, \quad A_3 \equiv 6 \pmod{13}.
\]
Thus, the factorization over $\mathbb{F}_{13}$ is determined.

\noindent \textbf{Step 2: Pre-computation of $V(x)$.}
According to Definition~\ref{def:Vx} (case $p=4k+1$), the structural polynomial is:
\[
V(x) = x^3 - x^2 - 2x + 1.
\]
Following our single-seed mechanism, we only need to extract the scalar seed for the main generator $A_1 \equiv 5$:
\[
S_1^{(1)} \equiv 2 - (A_1)^2 \equiv 2 - 25 \equiv 3 \pmod{13}.
\]
(Note: Since the subsequent process relies on the Dickson recurrence, there is no need to compute the seeds for other factors, thoroughly eliminating the memory overhead associated with multivariate lifting.)

\noindent \textbf{Step 3: Scalar Lifting of the Single Seed (to $\mathbb{Z}_{13^2}$).}
Instead of lifting polynomials, we explicitly lift this single scalar seed $S_1$. First, we compute the invariant update multiplier $u_1 \equiv -[V'(S_1^{(1)})]^{-1} \pmod{13}$:
\[
u_1 \equiv -[3(3)^2 - 2(3) - 2]^{-1} \equiv -[19]^{-1} \equiv 2 \pmod{13}.
\]
We then compute the lifting error $\Delta_1 = V(S_1^{(1)}) / 13 = 13 / 13 = 1$ and perform the seed update:
\[
S_1^{(2)} = S_1^{(1)} + (u_1 \cdot \Delta_1 \bmod 13) \cdot 13 = 3 + 2 \times 13 = 29.
\]

\noindent \textbf{Step 4: Main Generator Recovery and Recurrent Expansion.}
We recover the main generator $A_1^{(2)}$ from $S_1^{(2)}$. Solving the equation $(A_1^{(2)})^2 \equiv 2 - S_1^{(2)} \equiv -27 \pmod{169}$ under the constraint $A_1^{(2)} \equiv A_1 \equiv 5 \pmod{13}$ yields $A_1^{(2)} = 135$.

Subsequently, using the Dickson recurrence $A_i^{(2)} = A_1^{(2)} A_{i-1}^{(2)} - A_{i-2}^{(2)} \pmod{169}$, we seamlessly stream the coefficients of all remaining factors:
\[
A_2^{(2)} \equiv 135^2 - 2 \equiv 140 \pmod{169}, \quad A_3^{(2)} \equiv 135 \times 140 - 135 \equiv 6 \pmod{169}.
\]
Thus, we obtain the complete factorization over $\mathbb{Z}_{13^2}$:
\[
x^{14}-1 = (x-1)(x+1)(x^2 \pm 135x + 1)(x^2 \pm 140x + 1)(x^2 \pm 6x + 1).
\]

\subsection{Example 2: Factorization of $x^{20}-1$ over $\mathbb{Z}_{19^3}$}

This example demonstrates a multi-step lift ($e=3$) for $p=19$.
Here $k=4$, and the auxiliary polynomial is $V(x) = x^4 - 3x^2 + 1$.

\noindent \textbf{Step 1: Single Seed Extraction.}
Using the main generator $A_1 = 6$, under the single-seed mechanism we only extract the corresponding scalar seed:
\[
S_1^{(1)} = 2 - A_1^2 \equiv 2 - 36 \equiv 4 \pmod{19}.
\]

\noindent \textbf{Step 2: Iterative Lifting of the Single Seed.}
We exclusively lift this single seed $S_1$ across all subsequent precision layers.
\begin{itemize}
    \item \textbf{Lift to $\mathbb{Z}_{19^2}$:}
    Using $u_1 \equiv 14$, we calculate the update for the first step: $S_1^{(2)} = 4 + 14 \cdot V(4) = 42$.
    The corresponding coefficient is $A_1^{(2)} = 120$.
    \item \textbf{Lift to $\mathbb{Z}_{19^3}$:}
    We evaluate the higher-order error of $V(x)$ at the new point $S_1^{(2)} = 42$ (with $\Delta_2 = V(42) / 19^2 = 8605 \equiv 17 \pmod{19}$):
    \[
    S_1^{(3)} = 42 + (u_1 \cdot \Delta_2 \bmod 19) \cdot 19^2 = 42 + (14 \times 17 \bmod 19) \cdot 361 = 3652.
    \]
    Recovering $A$ gives $A_1^{(3)} = 6618$.
\end{itemize}
The complete factorization over $\mathbb{Z}_{19^3}$ is then obtained by generating the remaining coefficients via the Dickson recurrence using $A_1^{(3)}$, totally avoiding any polynomial division or extended Euclidean algorithms.

\section{Performance Evaluation}
\label{sec:performance}

To empirically validate the efficiency of our \texttt{Dickson-Engine}, we conducted a comparative benchmark against \texttt{NTL}~\cite{ntl} (a standard number theory library) on an Intel Core Ultra 9 workstation.

\subsection{Complexity vs. Prime Size ($p$)}

We first evaluated the factorization time over the base field $\mathbb{F}_p$ ($e=1$) for primes ranging from 100 to 10,000. For \texttt{NTL}, we invoked the standard \texttt{CanZass} (Cantor-Zassenhaus) algorithm. For our \texttt{Dickson-Engine}, to ensure a rigorously fair ``cold-start'' comparison against \texttt{NTL}, we measured the total wall-clock time inclusive of the \textbf{random seed search} and the recursive generation of coefficients. By explicitly accounting for this \textbf{self-imposed initialization overhead}, we ensure that our performance metrics reflect a complete, standalone execution cycle rather than a pre-optimized ideal.

As illustrated in Figure~\ref{fig:perf_p}, \texttt{NTL} exhibits a characteristic super-linear growth in execution time (typically $O(p^2 \log p)$), exceeding 5 seconds when $p \approx 10,000$. In stark contrast, our \texttt{Dickson-Engine} maintains a near-constant execution time (consistently below 0.02s). This empirical evidence confirms the \textbf{$O(p)$ linear complexity} of our scalar recurrence approach. The achieved speedup, surpassing \textbf{300$\times$} at $p=10009$, validates that our structural generation successfully bypasses the heavy polynomial arithmetic bottlenecks inherent in generic solvers.

\subsection{Lifting Cost vs. Precision ($e$)}

A key contribution of our work is the efficient lifting to the ring $\mathbb{Z}_{p^e}$. Given that \texttt{NTL} is optimized for field arithmetic and lacks native support for polynomial factorization over rings (typically resulting in non-invertible element errors), we analyzed the scalability of our engine independently.

Figure~\ref{fig:perf_e} depicts the execution time for a fixed prime ($p = 1009$) as the lifting precision $e$ scales from 1 to 100. The results demonstrate a strictly \textbf{linear trend} ($O(e)$). 
Minor fluctuations (e.g., the dip at $e = 50$) are artifacts of the \textbf{initialization wrapper} used for benchmarking rather than the lifting logic itself; finding a valid seed earlier in the probabilistic search phase reduces the total overhead, whereas the structural lifting time remains invariant. This confirms that our lifting process, driven by the auxiliary polynomial $V(x)$, is computationally inexpensive and highly scalable.

\begin{figure}[ht!]
\centering
\subfloat[Time vs. Prime Characteristic ($p$). Comparison with NTL.]{%
    \label{fig:perf_p}%
    \includegraphics[width=0.48\textwidth]{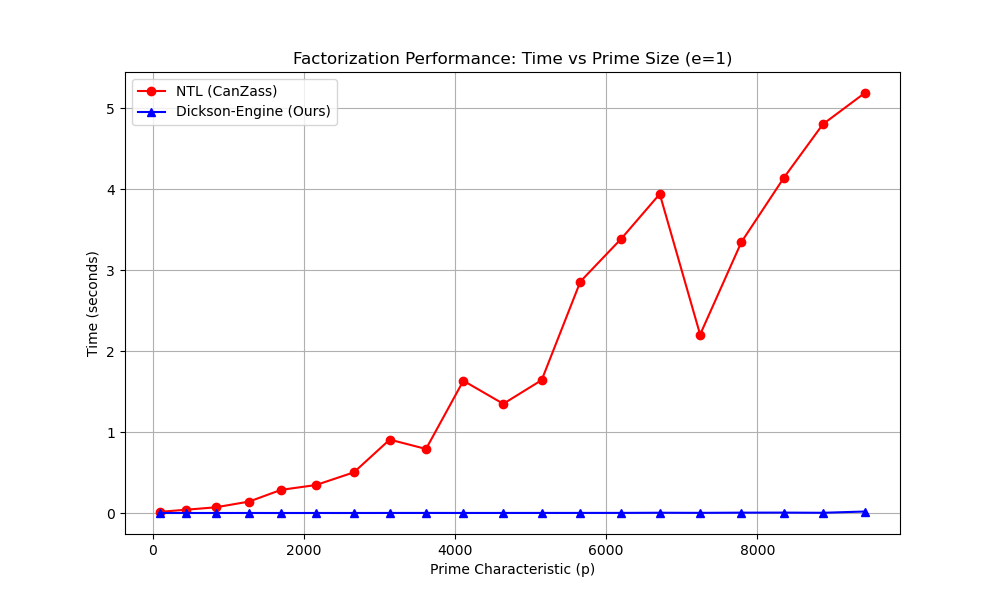}
}
\hfill 
\subfloat[Time vs. Precision ($e$). Scalability of Dickson-Engine.]{%
    \label{fig:perf_e}%
    \includegraphics[width=0.48\textwidth]{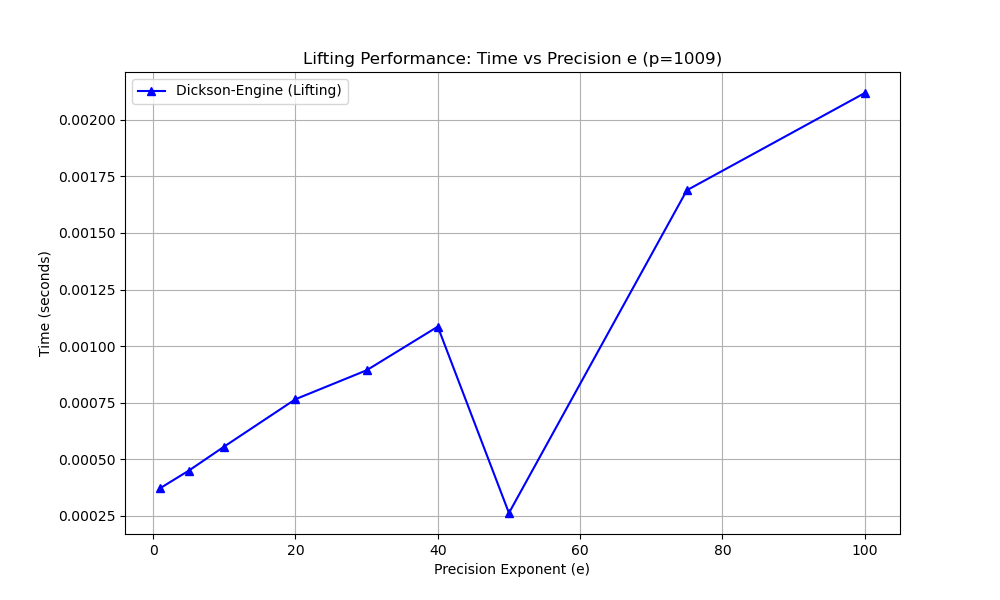}
}
\caption{Performance evaluation of the proposed algorithm. (a) Our method achieves orders-of-magnitude speedup over \texttt{NTL} for base field factorization. (b) The lifting process scales linearly with the ring precision $e$.}
\label{fig:performance_full}
\end{figure}

\section{Construction Results and Analysis}
\label{sec:construction}

\subsection{Setup and Methodology}
We applied our \texttt{Dickson-Engine} to the case of $p=13, e=2$, targeting the factorization of $x^{14}-1$ over the ring $\mathbb{Z}_{169}$. Using the proposed structural algorithm, we efficiently obtained all irreducible factors.
To evaluate the error-correcting capability, we mapped the constructed codes from the ring $\mathbb{Z}_{169}$ to the finite field $\mathbb{F}_{13}$ using the isometric \textbf{Gray map} defined in Section~\ref{sec:gray_map}. The resulting codes are linear codes over $\mathbb{F}_{13}$ with length $N = 14 \times 13 = 182$.

All computational experiments, including the performance benchmarks against NTL and the exhaustive code search, were conducted on a workstation equipped with an \textbf{Intel Core\texttrademark{} Ultra 9 285K} processor (24 cores) and \textbf{128 GiB} of RAM, running \textbf{Ubuntu 20.04.6 LTS}. The multi-core search was implemented in Python using the \texttt{multiprocessing} library.

\subsection{Optimal Parameters for Classical LCD Codes}

\subsubsection*{Structure and Linearity of the Constructed Codes}

To prevent ambiguity regarding the algebraic nature of the constructed codes, we clarify their structure as follows:
\begin{itemize}
    \item \textbf{Algebraic Domain (The Ring):} The codes are constructed as cyclic codes of length $n=14$ over the integer residue ring $\mathbb{Z}_{13^2}$. Algebraically, each code $\mathcal{C}$ corresponds to an ideal $\langle g(x) \rangle$ in the quotient ring $\mathbb{Z}_{13^2}[x]/\langle x^{14}-1 \rangle$. Since we select generators satisfying the LCD condition, these codes are free modules over $\mathbb{Z}_{13^2}$.
    
    \item \textbf{Physical Domain (The Field):} The parameters reported (e.g., length $N=182$, minimum distance $d$) refer to the image of $\mathcal{C}$ under the isometric Gray map $\Phi: \mathbb{Z}_{13^2}^{14} \to \mathbb{F}_{13}^{182}$. While the pre-image $\mathcal{C}$ is linear over the ring, the resulting code $\Phi(\mathcal{C})$ in the ambient space $\mathbb{F}_{13}$ is a \textbf{$\mathbb{Z}_{13^2}$-linear code} (a specific class of potentially non-linear codes in the vector space sense) that inherits the distance properties of the ring structure.
\end{itemize}

Our primary focus is the construction of \textbf{Linear Complementary Dual (LCD) codes}. According to the criteria established by Liu and Liu~\cite{Liu2015} (Corollary 4.8), a cyclic code over a finite chain ring is LCD if and only if its generator polynomial is self-reciprocal (assuming coprime length). Since all irreducible factors of $x^{14}-1$ over $\mathbb{Z}_{169}$ derived via our Dickson method are inherently \textbf{self-reciprocal}, any code generated by their product is guaranteed to be an \textbf{LCD code}.

We performed an exhaustive search (Brute Force) for small dimensions ($k \le 4$) and a massive parallel random sampling ($1.2 \times 10^8$ samples per code dimension) for higher dimensions ($k > 4$). The results are summarized in Figure~\ref{fig:lcd_performance}.

\begin{figure}[htbp]
    \centering
    \includegraphics[width=0.8\textwidth]{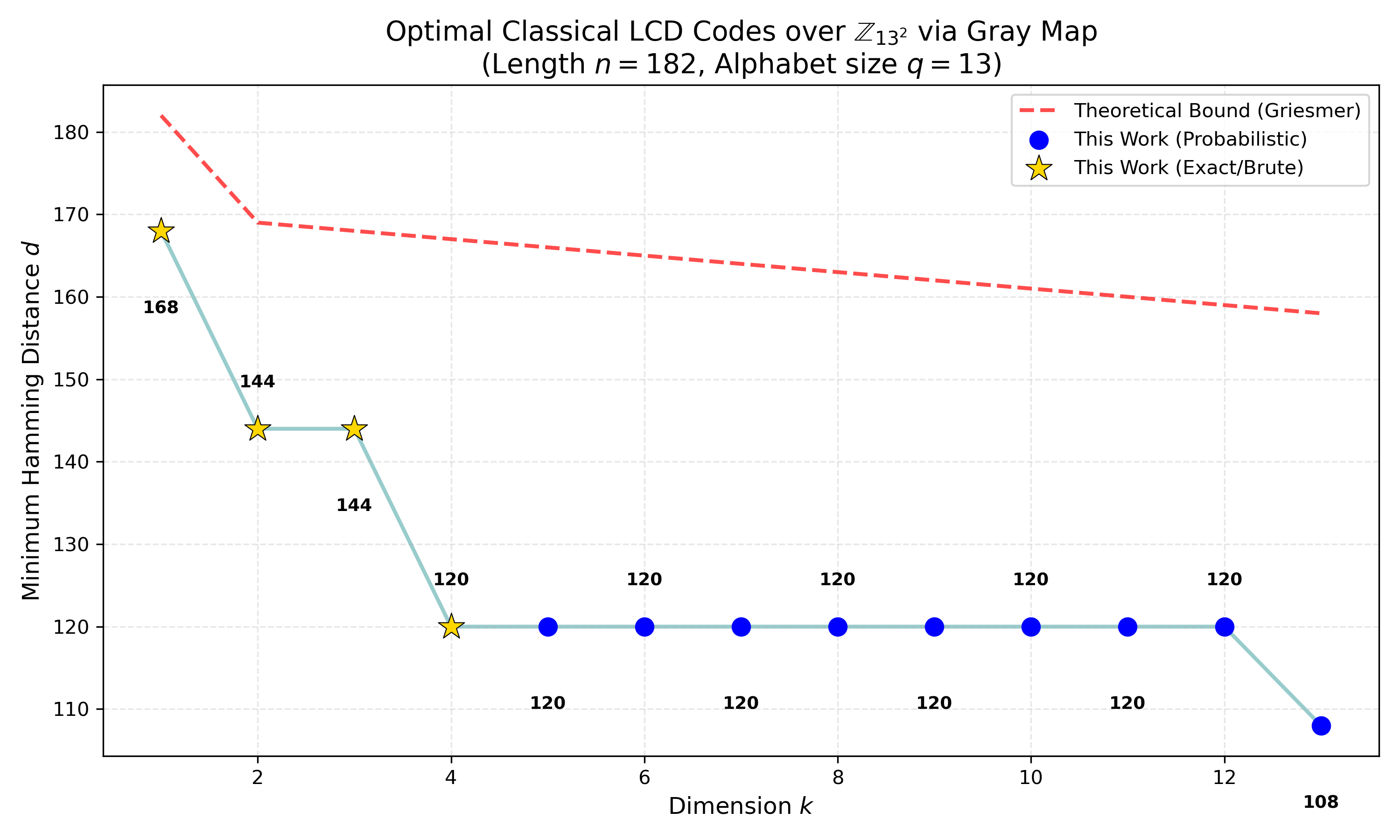}
    \caption{Performance profile of constructed LCD codes over $\mathbb{F}_{13}$ with length $N=182$. The plot contrasts the achieved minimum distance against the theoretical Griesmer Bound.}
    \label{fig:lcd_performance}
\end{figure}

\noindent \textbf{Key Observations:}
\begin{enumerate}
    \item \textbf{Near-Optimal Performance ($k \le 3$):}
    For small dimensions, our constructed codes achieve minimum distances $d$ that are remarkably close to the theoretical \textbf{Griesmer Bound}. Specifically, the codes $[182, 1, 168]_{13}$ and $[182, 2, 144]_{13}$ are confirmed by exhaustive search to be \textbf{optimal or near-optimal} within the cyclic algebraic structure.

    \item \textbf{The ``Robustness Plateau'' ($k=4 \to 12$):}
    A striking feature of these codes is the persistence of high minimum distance ($d \approx 120$) as the dimension $k$ increases from 4 to 12. Although the values for $k>4$ serve as probabilistic lower bounds on the true distance, the stability of the observed weight spectrum reveals an \textbf{intrinsic structural property} of cyclic codes over local rings that, to our knowledge, has not been previously documented. This phenomenon suggests that the algebraic structure of codes over $\mathbb{Z}_{p^2}$ still holds rich, unexplored phenomena---and our factorization engine provides precisely the \textbf{computational foundation} needed to investigate them at scale.

    \item \textbf{Symmetry Breaking:}
    To make this observation concrete, we list the complete factorization over $\mathbb{Z}_{169}$ and identify the conjugate pairing structure. Let $f_i$ denote the $i$-th irreducible factor (indexed $0$--$7$):

    \begin{table}[h]
    \centering
    \caption{Irreducible factors of $x^{14}-1$ over $\mathbb{Z}_{169}$ and their conjugate pairing.}
    \label{tab:factors}
    \renewcommand{\arraystretch}{1.2}
    \begin{tabular}{c|l|c}
    \hline
    \textbf{Index} & \textbf{Factor $f_i$} & \textbf{Conjugate Pair} \\
    \hline
    $0$ & $x - 1$ & --- \\
    $1$ & $x + 1$ & --- \\
    $2$ & $x^2 + 34x + 1$ & \multirow{2}{*}{$\{f_2, f_3\}$} \\
    $3$ & $x^2 + 135x + 1$ &  \\
    $4$ & $x^2 + 29x + 1$ & \multirow{2}{*}{$\{f_4, f_5\}$} \\
    $5$ & $x^2 + 140x + 1$ &  \\
    $6$ & $x^2 + 163x + 1$ & \multirow{2}{*}{$\{f_6, f_7\}$} \\
    $7$ & $x^2 + 6x + 1$ &  \\
    \hline
    \end{tabular}
    \end{table}
    
    Note that the coefficients within each conjugate pair are negatives of each other modulo 169 (e.g., $34 + 135 = 169$ and $29 + 140 = 169$). Selecting a generator polynomial $g(x)$ that preserves complete conjugate pairs forces the middle terms of each quartic product $f_i f_j = x^4 + S_i x^2 + 1$ to cancel, yielding a \textbf{sparse} generator.

    \textbf{Exhaustive comparison at $k=4$.} Our brute-force search over all $\binom{8}{6}=28$ choices of 6-factor generator polynomials reveals a stark dichotomy:
    \begin{itemize}
        \item \textbf{Symmetric selections} (all conjugate pairs intact), e.g., $g(x) = f_0 f_1 f_2 f_3 f_4 f_5$: yield $d = 72$.
        \item \textbf{Broken selections} (at least one pair split), e.g., $g(x) = f_0 f_1 f_2 f_3 f_4 f_6$: yield $d = 120$, a \textbf{66\% improvement}.
    \end{itemize}
    This pattern is remarkably consistent: among all 28 configurations at $k=4$, every selection that keeps all conjugate pairs intact produces $d \le 84$, while every selection that breaks at least one pair achieves $d \ge 96$.
\end{enumerate}

\subsubsection*{Design Insight: Sparsity vs.\ Density of Generator Polynomials}

The mechanism behind this phenomenon admits an intuitive explanation rooted in the structure of the generator polynomial $g(x)$.
When conjugate pairs $\{f_i, f_j\}$ with $A_j = -A_i$ are selected together, their product $(x^2 - A_i x + 1)(x^2 + A_i x + 1) = x^4 + S_i x^2 + 1$ cancels the odd-degree terms. The resulting generator $g(x)$ is a \textbf{sparse polynomial} with many zero coefficients. In the codeword space, this sparsity implies the existence of low-weight codewords---the ``skeleton'' of $g(x)$ can be recreated by a small number of non-zero positions, resulting in a low minimum distance.

Conversely, breaking a conjugate pair forces the retention of odd-degree terms in the factor product, producing a \textbf{dense polynomial} with non-zero coefficients spread across all degrees. This density acts as algebraic \textbf{redundancy}: the information is ``smeared'' uniformly across every coordinate of the codeword, creating a robust error-correcting structure that is difficult to penetrate with low-weight error patterns.

In essence, \textit{algebraic symmetry yields computational elegance, but breaking this symmetry provides the dense redundancy required for optimal error correction.} This principle may serve as a useful heuristic for future code constructions over local rings.

\subsection{The Advantage of Native Cyclic Structure}
Since the constructed LCD codes have trivial hull dimension ($c=0$), they naturally serve as candidates for standard quantum codes without entanglement consumption.

It is worth noting that while recent works have elegantly demonstrated that linear codes over finite fields can be transformed into LCD codes~\cite{carlet2018linear} or EAQECCs with desired entanglement parameters~\cite{anderson2024relative} via monomial equivalence, such column-scaling transformations generally destroy the highly desirable \textbf{cyclic structure}. In contrast, our Dickson-based factorization natively produces self-reciprocal generator polynomials. This allows for the direct construction of structurally pristine cyclic LCD codes (and pure quantum codes) over the local ring $\mathbb{Z}_{p^e}$, preserving the shift-invariance that is critical for efficient hardware implementation without any structural degradation.

\section{Conclusion and Future Work}

In this paper, we have established a novel structural connection between Dickson polynomials and the explicit factorization of $x^{p+1}-1$ over $\mathbb{Z}_{p^e}$. 
The resulting algorithm, implemented as the open-source \texttt{Dickson-Engine}, demonstrates a significant computational advantage over standard methods and enabled the construction of a family of near-optimal classical LCD codes. Our analysis of the factor indices further revealed that breaking the symmetry of conjugate pairs is a critical design principle for maximizing the minimum distance.

Our work points towards two primary directions for future investigation. 
The most fundamental challenge is to generalize the theoretical framework presented here. 
Our discovery of the $V(x)$ mechanism for the specific case of $n=p+1$ raises a crucial question: \textbf{can this ``structural lifting'' approach be extended to other classes of polynomials?}
A particularly important target is the family of general \textbf{Cyclotomic Polynomials, $\Phi_n(x)$}. 
Investigating whether a similar recursive structure governs the lifting coefficients of $\Phi_n(x)$ over residue rings could lead to a unified theory for the explicit construction of a much broader class of cyclic codes.

From a practical standpoint, further optimization within the current framework is also warranted. 
Future work includes extending this ``structural lifting'' framework to general cyclotomic polynomials $\Phi_n(x)$ and exploring alternative \textbf{Gray isometries} to further optimize non-linear properties in higher-order rings ($\mathbb{Z}_{p^e}$ for $e \ge 3$). 
A rigorous analysis of the \textbf{non-linear properties} of the resulting code families in $\mathbb{F}_p$ may also yield parameters that provably surpass the established bounds for linear codes under specific conditions.

\section*{Acknowledgment}
This work was supported by the NSFC (Nos. 12271084, U2241216) and the Key Laboratory of Safety-Critical Software Development and Verification, MIIT. 
We deeply appreciate Google Gemini~\cite{gemini2026} for its invaluable assistance in linguistic refinement and code optimization. All core algebraic insights, mathematical validations, and the final integrity of this work remain the sole responsibility of the human authors.

\bibliographystyle{plain}
\bibliography{ref}

\appendix
\begin{appendix}
\section{Mathematical Proof of the Structural Isomorphism}
\label{app:proof}

In this appendix, we provide the rigorous proof for Theorem 3.2. The core logic relies on establishing that the polynomial $V(x)$, defined via specific binomial coefficients, implicitly satisfies the recurrence relations governing the roots of the factorization.

\subsection{Combinatorial Foundation}

We first introduce a necessary combinatorial lemma, which relates binomial coefficients in a summation identity.

\begin{lem} \label{lemma:comb}
Let $j, k$ be positive integers with $k \ge 2j$. Then the following identity holds:
\begin{equation}
    \sum_{i=0}^{j} (-1)^i \binom{k-2i}{j-i} \binom{k-i}{i} = 1.
\end{equation}
\end{lem}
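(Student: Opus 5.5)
The plan is to collapse the product of the two binomial coefficients into a simpler product, and then recognise the sum as a $j$-th finite difference of a polynomial of degree $j$.

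\textbf{Step 1 (rewriting the product).} For $0\le i\le j$ we have $k-2i\ge k-2j\ge 0$ and $k-i-j\ge 0$, so every factorial below is defined. Expanding and cancelling $(k-2i)!$ gives
\[
\binom{k-i}{i}\binom{k-2i}{j-i}=\frac{(k-i)!}{i!\,(k-2i)!}\cdot\frac{(k-2i)!}{(j-i)!\,(k-i-j)!}=\frac{(k-i)!}{i!\,(j-i)!\,(k-i-j)!}.
\]
Regrouping the right-hand side,
\[
\frac{(k-i)!}{i!\,(j-i)!\,(k-i-j)!}=\binom{k-i}{j}\binom{j}{i}.
\]
Hence the left-hand side of the lemma equals
\[
\sum_{i=0}^{j}(-1)^i\binom{j}{i}\binom{k-i}{j}.
\]

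\textbf{Step 2 (finite differences).} Let $f(x)=\binom{x}{j}=x(x-1)\cdots(x-j+1)/j!$. This is a polynomial of degree $j$ with leading coefficient $1/j!$. It agrees with the usual binomial coefficient at every nonnegative integer, and in particular at $x=k-i$ for $0\le i\le j$, since $k-i\ge k-j\ge j\ge 0$.

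The sum is therefore exactly the $j$-th backward difference $(\nabla^j f)(k)=\sum_{i=0}^{j}(-1)^i\binom{j}{i}f(k-i)$. We then use the standard fact that $\nabla$ lowers the degree of a polynomial by one and multiplies its leading coefficient by the degree. Applying this $j$ times to $f$ gives the constant $j!\cdot\frac{1}{j!}=1$, independently of $k$. This proves the identity.

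If a fully self-contained argument is preferred, the same conclusion follows from Pascal's rule. Indeed $\nabla\binom{x}{m}=\binom{x-1}{m-1}$, so $\nabla^j\binom{x}{j}=\binom{x-j}{0}=1$, provided the arguments stay nonnegative. They do, because $k\ge 2j$.

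\textbf{Main obstacle.} The only real insight is the regrouping in Step 1. After that the argument is routine. The one point to check carefully is that the hypothesis $k\ge 2j$ keeps all factorials and binomial evaluations in their valid range.
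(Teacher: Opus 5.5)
Your proof is correct, and it takes a different route from the one the paper indicates. The paper gives no actual argument for this lemma. It only says the identity ``can be derived'' by coefficient extraction from the generating function of Chebyshev polynomials or by induction, and defers the details to a reference. The Chebyshev connection is natural there: $(-1)^i\binom{k-i}{i}$ are the coefficients of the second-kind Chebyshev/Dickson polynomial with generating function $1/(1-xt+t^2)$, and they are also the coefficients of $V(x)$ when $p=4k+3$.

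You instead use the trinomial-revision identity $\binom{k-i}{i}\binom{k-2i}{j-i}=\binom{k-i}{j}\binom{j}{i}$. This turns the sum into $\sum_{i=0}^{j}(-1)^i\binom{j}{i}\binom{k-i}{j}=(\nabla^j f)(k)$ with $f(x)=\binom{x}{j}$. That is identically $1$, because each application of $\nabla$ lowers the degree by one, leaving the constant $j!\cdot\frac{1}{j!}$.

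Your argument is short and self-contained, and it explains \emph{why} the value does not depend on $k$. It also shows that the hypothesis $k\ge 2j$ only keeps the factorial manipulations legitimate. If binomial coefficients are read as polynomials in the upper index, both your regrouping and the difference computation hold for every $k$.

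The generating-function route would tie the identity to the paper's Dickson-polynomial structure, which is thematically appealing. As written, however, the paper leaves that route unexecuted, so your version supplies the missing proof.
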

\begin{proof}
This is a specific instance of identities involving alternating sums of binomial coefficients. It can be derived by considering the coefficient extraction from the generating function of Chebyshev polynomials or by induction. For brevity, we refer readers to standard combinatorial texts (e.g., [13]) for the detailed algebraic manipulation.
\end{proof}

\subsection{Proof of Theorem 3.2}

\textbf{Goal:} We aim to show that the roots of $V(x) = \sum_{r=0}^k V_r x^{k-r}$ are exactly the structural values $S_i^{(h)} = 2 - (A_i^{(h)})^2$ in $\mathbb{Z}_{p^h}$.

\textbf{Step 1: Establishing the System of Equations.}
Recall that the factorization of $x^{p+1}-1$ (specifically the part corresponding to size-2 cosets) can be written in terms of $S_i$. For the case $p = 4k+3$, we have:
\begin{equation}
    \prod_{i=1}^k (x^4 + S_i^{(h)} x^2 + 1) \equiv \sum_{j=0}^k x^{4j} \pmod{p^h}.
    \label{eq:factor_expansion}
\end{equation}
Let $W_r$ denote the signed elementary symmetric polynomials of the roots $\{S_1^{(h)}, \dots, S_k^{(h)}\}$:
\begin{equation}
    W_r = (-1)^r \sum_{1 \le i_1 < \dots < i_r \le k} S_{i_1}^{(h)} \dots S_{i_r}^{(h)}.
\end{equation}
Expanding the left side of Eq.~(\ref{eq:factor_expansion}) and collecting terms by powers of $x$, the coefficient of $x^{4j}$ (where $0 \le j \le k$) must match the right side (which is $1$). By comparing coefficients, we obtain a linear system governing $W_r$. Specifically, for the terms involving even powers, we derive:
\begin{equation}
    \sum_{i=0}^j \binom{k-2i}{j-i} W_{2i} \equiv 1 \pmod{p^h}, \quad \text{for } 0 \le j \le \lfloor k/2 \rfloor.
    \label{eq:linear_system}
\end{equation}

\textbf{Step 2: Verification of $V(x)$.}
We need to prove that the coefficients $V_r$ defined in Definition 2 satisfy the system in Eq.~(\ref{eq:linear_system}). In other words, we need to show that $W_{2i} \equiv V_{2i}$.
Recall from Definition 2 (for $p=4k+3$) that:
\begin{equation}
    V_{2i} = (-1)^i \binom{k-i}{i}.
\end{equation}
Substituting $V_{2i}$ into the left side of Eq.~(\ref{eq:linear_system}):
\begin{equation}
    \sum_{i=0}^j \binom{k-2i}{j-i} \left[ (-1)^i \binom{k-i}{i} \right] = \sum_{i=0}^j (-1)^i \binom{k-2i}{j-i} \binom{k-i}{i}.
\end{equation}
By invoking \textbf{Lemma \ref{lemma:comb}}, this summation identically equals $1$.
Thus, the coefficients $V_r$ are indeed the elementary symmetric polynomials of the roots $S_i^{(h)}$. By Vieta's formulas, this implies that $S_i^{(h)}$ are the roots of the polynomial $V(x)$.

The proof for the case $p=4k+1$ follows an analogous derivation with slightly different binomial identities. This confirms that the lifting condition $V(S_i^{(h)}) \equiv 0 \pmod{p^h}$ is structurally isomorphic to preserving the factorization. \qed

\end{appendix}

\end{document}